\newtheorem{lemma}{Lemma}
\newtheorem{proposition}{Proposition}
\newtheorem{corollary}{Corollary}
\newenvironment{proof}[1][Proof]{\noindent\textbf{#1.} }{\ \rule{0.5em}{0.5em}}
\begin{document}

\title{Domestic Competitive Balance and International Success: The Case of The Football Industry\thanks{{\footnotesize{The first author acknowledges financial support from the Spanish Agencia Estatal de Investigaci\'{o}n (AEI) through grant PID2020-115011GB-I00, funded by MCIN/AEI/10.13039/501100011033. }}
}}
\author{\textbf{Juan D. Moreno-Ternero}\thanks{Universidad Pablo de Olavide, Seville, Spain. Corresponding author. e-mail: jdmoreno@upo.es} \\
\textbf{Tim Pawlowski}\thanks{University of T\"{u}bingen, Germany. Tim Pawlowski is also affiliated with the LEAD Graduate School and Research Network, as well as the Interfaculty Research Institute for Sports and Physical Activity in T\"{u}bingen.} \\
\textbf{Shlomo Weber}\thanks{Southern Methodist University, Dallas, USA.}}

\maketitle

\begin{abstract}
\noindent  This paper examines the interdependence of international success and competitive balance of domestic sports competitions.  More specifically, we apply the notion of the Herfindahl-Hirschman index to examine the effect of international rewards on distortion of competitive balance in domestic competitions and derive conditions under which the level of domestic competitive balance raises or falls. Our results yield interesting policy implications for the regulation of prize schemes in international competitions.

\end{abstract}

\noindent \textbf{\textit{JEL numbers}}\textit{: L10, L83, Z20.}%\medskip{} 
%C72, D21, L83

\noindent \textbf{\textit{Keywords}}\textit{: competitive balance,
domestic and international competitions, regulation, Herfindahl-Hirschman index, UEFA Champions League. }%\medskip {}\medskip {}

\newpage

\section{Introduction}

The rapid expansion of the sports industry has been tremendously important in the last decades. Deloitte's 2023 sports industry outlook states that  \textit{the growth and transformation of the sports industry is forcing organizations to take a more sophisticated approach  that makes the  industry more attractive for investors, more immersive for fans, and more supportive of athletes}. 
In the scientific literature, there has been an increasing interest to study organizational aspects related to this sports industry. These range from the fair design of competitions and prize awards, to the analysis of game scheduling, or the allocation of gate and sportscast revenues (e.g., Peeters, 2015; Arlegi and Dimitrov, 2020; Berganti\~ nos and Moreno-Ternero, 2020; 2021; Csat\'{o}, 2021; Dietzenbacher and Kondratev, 2023; Li et al., 2023). %, the analysis of game scheduling (e.g., Dur\'{a}n et al., 2017, ) and transfer systems (e.g., Hoey et al., 2021), to the impact of managers (e.g., van Ours and van Tuijl, 2016; Peeters et al., 2020)% The reader is referred to Wright (2009, 2014) or Kendall and Lenten (2017) for surveys of the fast-expanding literature, and Palacios-Huerta (2014) or  for recent books.
And a special emphasis has been made on competitive balance, a core concept in the sports economics literature that usually refers to the degree of parity or equality among competing teams in the same league. However, the concept %of competitive balance 
is not limited to sports and has also been scrutinized in other parts of the entertainment business and beyond. For instance, the effects of competition on process innovation and product introduction have long been studied (e.g., Vives, 2008). In this regard, competitive balance ensures the absence of unfair advantages for a small set of firms. Likewise, in politics, competitive balance prevents a single political party to attain a monopoly on electoral power (e.g., Roemer, 2001). Also, equality of opportunity in education is linked to competitive balance (also known as \textit{leveling the playing field}) for the entire student population (e.g., Roemer, 2006). 

The seminal contributions in the context of sports by Rottenberg (1956) and Neale (1964) emphasize the positive relation between the level of balance within a sports competition and its attractiveness for spectators. Ever since there have been passionate discussions amongst scholars and practitioners alike as to whether (and, if so, to which extent) competitive balance indeed matters for spectators and fans.  
The empirical literature exploring this is vast and inconclusive. For instance, Butler (1995) revealed that the relationship between the number of spectators and competitive balance for Major League Baseball in the US crucially depends on the chosen measure of competitiveness. More recently, Guironnet (2023) found that French Ligue 1 attendance is positively correlated with competitive balance, whereas English Premier League attendance is more dependent of the presence of star players. Other recent studies, either following a behavioral economics paradigm (see Budzinski and Pawlowski, 2017, for an overview) and/or using more granular data for exploring in-play dynamics of entertainment utility (see seminal work by Ely et al., 2015, and Richardson et al., 2023 for an application to football) allow to conclude that league organizers, as well as competitors in sports markets, should have a common incentive to preserve a certain level of competitive balance.

Our research question is motivated by the regulatory practice in professional sports. For instance, North American professional sports have long been implementing salary caps or draft systems, in which weaker clubs have priority (at least a higher probability) to pick draftees each season to maintain or even increase competitive balance. Likewise, the allocation of revenues raised from broadcasting rights, the most important source of revenue for professional sports, is certainly driven (at least, partially) by that goal too (e.g., Berganti\~ nos and Moreno-Ternero, 2023a). But sometimes competitions overlap and the revenue allocations might intertwine each other. In particular, when participating clubs do not fully overlap in those competitions (such as in the case of the UEFA Champions League).\footnote{UEFA stands for the Union of European Football Associations, the governing body of football in Europe.} %This is precisely the case with international competitions, such as the UEFA Champions League, that involve some competitors from different countries (selected among those competing in each domestic competition, via some merit scheme). \footnote{In October 2018, it was announced that the Netherlands' three representatives in European football would distribute some of their earnings from these competitions among the other clubs in Eredivisie.} 
In this paper, we are concerned with the effect that revenue allocation schemes at international competition will have on domestic competition.

%As mentioned above, competitive balance is an important concept for professional sports. 
While one could safely argue that competitive balance is one of the key issues that European football has to address in order to ensure its long-term prosperity,\footnote{In fact, UEFA President Aleksander Ceferin told the general assembly of the European Club Association (ECA) in 2017: \textit{"Let's put our cards on the table and be honest with ourselves: the biggest challenge over the next few years will be `competitive balance'.}" https://www.reuters.com/article/us-soccer-europe-clubs/uefa-president-says-rich-poor-divide-is-key-issue-in-european-football-idUSKCN1BF2CX} its quantification is not an easy task. The difficulties are mainly associated with its multi-dimensional nature. Consequently, there are multiple approaches to measure competitive balance. As Zimbalist (2002, p. 112) puts it: ``there are almost as many ways to measure competitive balance as there are to quantify money supply". %In fact, Zimbalist (2002) offered his own measure of competitive balance which is based on the standard deviation of winning percentages from year to year, and it assumes that all teams are of equal strength.\footnote{That is, in a league of perfect parity, the standard deviation of win percentages would be $\sigma= \frac{0.5}{M}$, where $M$ is the number of games played during the season by each team.} This assumption is not entirely reasonable in our setting, where 
As we consider both leagues and teams of various strengths and put an emphasis on the budget distributions, there are two indices that fit our approach: The Concentration Ratio (CR), the fraction of industry size held by the larger firms (Hall and Tideman, 1967; Besanko et al., 2017), and the Herfindahl-Hirschman Index (HHI), the sum of the quadratic shares (Herfindahl, 1950; Hirschman, 1945).\footnote{Bossert et al. (2022) have recently studied a larger family of indices encompassing these two. See also Martinez and Moreno-Ternero (2022) for alternative families of indices in a different setting.} We opt for the latter because it is superior on analytical grounds (including the fact that it weights firms according to their size) and there is a long tradition of use in industrial organization, since the U.S. Horizontal Merger Guidelines were unveiled in 1982. Therein, the U.S. Department of Justice divides the spectrum of market concentration as measured by the HHI into three regions that can be broadly characterized as unconcentrated (HHI below 1000), moderately concentrated (HHI between 1000 and 1800) and highly concentrated (HHI above 1800). 

The importance of HHI, and, in particular, its link with the societal welfare has been studied extensively in the industrial organization literature (e.g., %Cowling and Waterson, 1976; 
Dansby and Willig, 1979; Farrell and Shapiro, 1990). More recently, it also has become a popular measure in sports economics (e.g., Owen et al., 2007).\footnote{Depken (1999) originally applied the same metric to estimate inequality in the baseball league.}   
In our paper, we apply the HHI to examine the effect of international rewards on distortion of competitive balance in domestic competitions and derive conditions under which the level of domestic competitive balance raises or falls. We focus on the case of football, the most popular sport (for which sources estimate over 3.5 billion fans worldwide).\footnote{https://worldpopulationreview.com/country-rankings/most-popular-sport-by-country} Our results derive interesting policy implications for the regulation of prize schemes in international competitions represented in our setting by the UEFA Champions League (whose structure is described in the following section).  

The rest of the paper is organized as follows. In Section 2, we describe the structure and brief history of the UEFA Champions League that serves as the platform for our analysis. In Section 3, we present the benchmark model, which examines the competitive balance implications of award schemes that share the prize endowment of the international competition $E$ equally among the $k$-top clubs in the domestic competition. We show that there is a threshold value of the number of award recipients, $k^*(E)$, such that awarding to $k<k^*(E)$ clubs reduces the level of competitive balance (relative to the pre-award stage) while awarding it to $k>k^*(E)$ clubs increases it. In particular, the award to only one top club hurts competitive balance the most, while sharing it equally among all clubs (in the domestic competition) improves the most. In Section 4, we focus our analysis on the threshold value $k^*(E)$ and show that this value is weakly increasing in the size of the prize $E$. However, for a given $k$, competitive balance follows an inverse U-shape with respect to $E$, i.e., it is increasing below some threshold value and declining above that threshold. We also provide the support of the equal-share scheme among the $k$-top schemes by showing that it is optimal in terms of competitive balance among all other $k$-top schemes. In Sections 5 and 6, we offer an illustration and discussion of our results, respectively. 

\section{The UEFA Champions League}

 %which will be our concern in this paper. %for which international and domestic competitions coexist. In this paper, we are precisely concerned with the interplay of those two types of competitions.

In 1955, the so-called European Champion Clubs' Cup was launched. It was a yearly competition played by the national league football  champions of the strongest UEFA national associations.\footnote{More precisely, sixteen teams played the first edition: Milan (Italy), AGF Aarhus (Denmark), Anderlecht (Belgium), Djurgrden (Sweden), Gwardia Warszawa (Poland), Hibernian (Scotland), Partizan (Yugoslavia), PSV Eindhoven (Netherlands), Rapid Wien (Austria), Real Madrid (Spain), Rot-Weiss Essen (West Germany), Saarbrucken (Saar), Servette (Switzerland), Sporting CP (Portugal), Stade de Reims (France), and Vrs Lobog (Hungary). %Some of these teams had not been champions in their respective league , such as clubs from Poland, Hungary, and Switzerland. 
The English champion, Chelsea, decided not to play, and there was no English team in that first
edition. Thus, it was intended for football champions, but the intention was also for all
UEFA national associations to participate, not only the ``strongest."} 
The competition took on its current name, the UEFA Champions League, in 1992, adding a group stage to the competition and allowing multiple entrants from certain countries. In its present format, it begins with four knockout qualifying rounds and a play-off round. The surviving clubs enter the group stage, joining clubs qualified in advance. Overall, 32 clubs are drawn into eight groups of four clubs and play each other in a double round-robin system. The eight group winners and eight runners-up proceed to the knockout phase that culminates with the final match. The 32 teams participating in the group stage during the last editions %of the UEFA Champions League, from 2018-19 to 2023-24, 
have been the title holders (of both UEFA Champions League and Europa League), the 10 champions from the ten strongest domestic competitions, the 6 runners-up from the six strongest domestic competitions, the 4 third-placed and the 4 fourth-placed teams from the four strongest domestic competitions, and the winners of a six-round qualifying tournament involving winners and, possibly, runner-ups from the remaining weaker domestic competitions. Thus, some domestic competitions are not represented in the group stage, whereas others are with a number of clubs ranging from 1 to 4 (occasionally 5, for one or two at most).\footnote{From season 2024/2025, 36 (instead of 32) qualified clubs will be organized in a single league which follows an unbalanced schedule, i.e., each club plays 8 matches. The top 8 clubs in this league will directly qualify for the round of 16. The remaining 8 spots for the knockout phase will be decided after two-legged play-offs between clubs ranked 9 to 24. This change in the format will increase the number of games played from 125 to 225, which will probably make the competition more lucrative for UEFA and the participating teams.}
%It is, by now, one of the most prestigious tournaments in the world and the most prestigious club competition in European football. The clubs that take part in each edition share more than two billion euros in payments from UEFA, with a meritocratic scheme rewarding teams as they advance in the competition.
%Real Madrid is the most successful club in the competition's history, having won the tournament 14 times, including its first five editions. \footnote{} 

Since 1992, Spanish clubs have accumulated the highest number of victories (19 wins), followed by England (15 wins), Italy (12 wins) and Germany (8 wins). England has the largest number of winning clubs, with 6 clubs having won the title. Overall, a total of 23 clubs have won the title (only 6 from 2010). The competition has thus been dominated by clubs from just four out of UEFA's 54 member nations, and this dominance is typically associated with the size of football markets in these countries, that allow the big clubs to buy up the best players.% (Figure 1) relates the average league position of teams appearing in the top two English divisions over a decade to their average wage spending, expressed in proportion to the average wage spending of all clubs. The figure shows that wage spending is highly correlated with league position over time. The ÃÂ¢ÃÂÃÂR2ÃÂ¢ÃÂÃÂ is a statistic which measures the percentage of variation in league position that is captured by wage spending, so a value of 90% is a very strong signal of a significant relationship between wages and league standing. Likewise, the relationship between average league position and revenues, again expressed relative to the average of all other clubs, is very close indeed. Over a decade the R2 is again just over 90%. These correlations need explanation ÃÂ¢ÃÂÃÂÃÂÃÂ­ a theory of how the market works ÃÂÃÂ­ÃÂ¢ÃÂÃÂ in order to be understood.
 
Participation in this competition can represent a significant financial boost for clubs. In particular, performances in the tournament, such as advancement from round to round, largely increase club's payouts. In the 2021/2022 season, for instance, the winning club (Real Madrid) collected \euro 83.2 million in performance-based prize money (including an additional \euro 4.5 million for winning the UEFA Super Cup in August 2022). %https://www.sportingnews.com/us/soccer/news/champions-league-prize-money-2022-2023-ucl-winners-uefa/axbbtipavsvy1howxwj6vanp 
While 55 percent of the total prize is distributed according to such performance pay, the rest is distributed among the 32 participants via coefficient payout (30 percent) and broadcast market payout (15 percent).\footnote{The coefficient payout is the outcome of an algorithm that ranks European performance over a 10-year period.} 

As only a few clubs from each domestic competition participate in the UEFA Champions League each year, the scale of such rewards could possibly distort domestic competitions. By treating the substantial increase in payouts to the participating clubs in 1999/2000 as a natural experiment, Pawlowski et al. (2010) provide some \textit{empirical} evidence for this claim. In this paper, we \textit{theoretically} explore the way in which domestic competitions interact with international competitions like the UEFA Champions League. Our focus is on the role that prizes at the international level have on the level of \textit{competitive balance} in the corresponding domestic competitions, with a special emphasis on the implications that our results provide on the regulatory policies for prize schemes in international competitions.

In our (theoretical) analysis, we order clubs by their budget size, assuming that better teams have larger budgets. We also assume that participation and prizes in international competitions are also determined by budget size. 
We acknowledge that this is not always realistic as, sometimes, small-budget clubs participate in international competitions (and receive prizes) and some larger-budget clubs may fail to compete internationally. Nevertheless, in reality, larger-budget clubs indeed tend to perform better and compete internationally most of the times.\footnote{Kuper and Szymanski (2018) show that wage spending is highly correlated with league position over time. Nevertheless, it is also argued that market size highly correlates with regulatory changes in the 1990s (e.g., Simmons, 1997; Frick, 2009).}
%in the "weak league" extension, it seems to matter (regarding total revenue) if a large or small budget team gets the prize money. In reality, , but it is not true that This assumption is sensitive to our results. 

\section{The benchmark analysis}
%A stylized model

Consider a set of domestic (national) league competitions, one for each country, i.e., Bundesliga in Germany, La Liga in Spain, Premier League in England, Serie A in Italy etc. Each competition involves $n \geq 2$ clubs. Let $N = \{1,\ldots,n\}$. %playing a double round-robin tournament.\footnote{Normally, $n=20$, but other values might also exist, e.g., $n=18$ in Bundesliga. As we shall mention later, there are some exceptions within Europe to this format.} 
For each $i\in N$, let $x_i$ denote the budget of club $i$. Let $X=(x_{i})_{i\in N}=(x_1,\ldots x_n)$ and $x=\sum_{i=1}^n x_i$. %where a win in each game is awarded 3 points and each tie is awarded 1 point. 
For ease of exposition, assume $x_1\ge x_2\ge \ldots\ge x_n$. We assume that clubs are utility maximizers (e.g., Sloane, 1971), i.e., they invest their budgets in order to maximize sporting success. As such, budgets determine club's strength.\footnote{We acknowledge that, empirically, club investments are correlated with expected payoffs. But we dismiss this aspect from our stylized theoretical analysis.
%Modeling investments as a function of expected payoffs would have empirical implications and perhaps more importantly, it could yield more interesting results. For example, it seems possible that the very top teams are nearly certain they will get a payment from the international league, but the mediocre teams need to improve to get a payment. In this case, the incentive stemming from the international league could cause the mediocre teams to invest more and thereby change competitive balance. In other words, the payment from the international league could influence the initial budgets, which would change competitive balance.
}

We explore the level of competitive balance of competition $N$ by focussing on the budget distribution, via the so-called Herfindahl-Hirschman Index. The index \textit{increases} with \textit{decreasing} levels of competitive balance. Formally,\footnote{%As we are only interested in relative values of $H$ in the same competition, 
To ease our analysis, we ignore the $10000$ term that is typically considered multiplying the expression.}
$$
H(X)=\sum_{i\in N}\left(\frac{x_i}{x}\right)^2=\frac{1}{x^2}\sum_{i\in N}x_i^2.
$$

We also assume that there exists an international competition with a prize structure granting an overall endowment $E$ to the domestic competition. Let $R$ be a sharing rule allocating $E$ among clubs in $N$ which yields the vector of shares $(R_1(E),\ldots, R_n(E))$ to each  club with $\sum_{j=1}^n R_j(E) =E$.%\footnote{Our model builds on the literature about fair allocation rules (e.g., Thomson, 2011), which has also recently paid attention to issues related to professional sports (e.g., Berganti\~ nos and Moreno-Ternero, 2020, 2021; Krumer and Moreno-Ternero, 2023). It also connects to the literature on prize allocation in contests (e.g., Moldovanu and Sela, 2001; Dietzenbacher and Kondratev, 2023). %which is also largely connected to sports economics (e.g., Sela, 2023).}

 We want to explore the implications of sharing rules into competitive balance of domestic competitions. To do so, let
$x_j^{R,E}=x_j+R_j (E)$, for each $j=1,\dots n$. The aim is to compare the value of the competitive balance index introduced above for distributions $X$ and $X^{R,E}$.%we need to consider the resulting budget distribution $X^{R,E}$ after implementing $R$ for endowment $E$, i.e., 
%should we add a paragraph (before we start with "our" sharing rules) that exactly states how it is done in practice? If we would do so, we could relate to it in the following (when ever it fits). As such, we could, for instance, write whether/how the small "solidarity payment" which is in practice changes (or not) our results. We would need overviews on payouts, right? Like this one on pp. 41/42: https://editorial.uefa.com/resources/0275-151e1a55c231-ef1c32b881dc-1000/en_ln_uefa_financial_report_2020-2021.pdf

%\section{Sharing Rules $R^k$}

The first family of sharing rules we consider  are the  rules $R^k$ in which  the $k$-top clubs in each domestic competition are (equally) rewarded, while the others receive zero. 
%An interesting policy implication of our analysis is that the prize structure of the international competition should be modified. As performance-pay is a deeply rooted principle in sports, a likely suggestion is to modify the structure from club-based to domestic competition-based. 
That is, rule $R^k$ is such that $R_j^k(E)=\frac{E}{k}$ for each $j\in \{1,\ldots, k\}$, and $0$ otherwise. To simplify the notation we denote the resulting distribution by $X^{k,E}$.

To begin, we consider the  rule $R^1$ that  allocates the endowment to a unique club in the domestic competition. That is, $R^1$ is a top performance-based prize structure of the international competition.
%It is just a quick idea and I do not know whether it fits here or elsewhere, but an interesting idea could be to have reward based payments directly to the clubs and to have association-based payments for the market pool in the future; This would also (i) distort CB - however, to a much lesser extent; It would keep (ii) winning incentives for clubs in UCL and (iii) it would reward all domestic clubs for their contribution that soccer is so attractive in the domestic market (which contributes to a large market pool in a given country) 
Formally, $R^1$ is such that $R_1^1(E)=E$, $R_j^1(E)=0$ for each $j> 1$. 
We then obtain the following result. 

\begin{proposition}\label{prizes}
The top performance-based prize structure of the international competition $R^1$ hurts competitive balance at domestic competitions.  That is,  $H(X^{1,E}) \ge H(X)$.
\end{proposition}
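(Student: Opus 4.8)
The plan is to reduce the claim to a single polynomial inequality in the budgets and the endowment, and then exhibit that inequality as a sum of manifestly non-negative terms. First I would put both indices on a common footing. Writing $S=\sum_{i\in N}x_i^2$, the original index is $H(X)=S/x^2$. Under $R^1$ only the top club's budget changes, from $x_1$ to $x_1+E$, so the new total becomes $x+E$ and the new sum of squares becomes $(x_1+E)^2+\sum_{i\ge 2}x_i^2=S+2x_1E+E^2$. Hence
$$
H(X^{1,E})=\frac{S+2x_1E+E^2}{(x+E)^2}.
$$

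Next, since both denominators $x^2$ and $(x+E)^2$ are strictly positive, the target inequality $H(X^{1,E})\ge H(X)$ is equivalent, after cross-multiplication, to
$$
x^2(S+2x_1E+E^2)\ge S\,(x+E)^2.
$$
Expanding both sides, cancelling the common term $Sx^2$, and factoring out $E\ge 0$, this reduces to
$$
2x(x_1 x-S)+E(x^2-S)\ge 0.
$$

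The final step is to verify that each grouped term is non-negative. For the second, $x^2-S=\left(\sum_i x_i\right)^2-\sum_i x_i^2=\sum_{i\ne j}x_ix_j\ge 0$, because all budgets are non-negative. For the first, $x_1 x-S=\sum_{i\in N}x_i(x_1-x_i)\ge 0$, because $x_1$ is the largest budget, so every summand is non-negative. As the coefficients $2x$ and $E$ are themselves non-negative, the whole expression is non-negative and the inequality follows, with equality when $E=0$ (or in the degenerate case where the index already equals its maximal value).

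I do not expect a genuine obstacle here, since the content is entirely elementary. The only points requiring care are bookkeeping — tracking that the denominator grows from $x^2$ to $(x+E)^2$ while just one numerator term changes — and choosing the grouping in the last display so that the two standard facts $x^2\ge\sum_i x_i^2$ and $x_1 x\ge\sum_i x_i^2$ render non-negativity transparent, rather than having to be recovered from an unfactored expression.
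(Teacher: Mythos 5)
Your proof is correct and follows essentially the same route as the paper's: cross-multiply, cancel the common term $Sx^2$, factor out $E$, and reduce to an inequality that holds because $x_1x\ge\sum_i x_i^2$ and $x^2\ge\sum_i x_i^2$. Your grouping $2x(x_1x-S)+E(x^2-S)\ge 0$ is just a slightly more explicit rearrangement of the paper's final display $(2x_1+E)x^2\ge(2x+E)\sum_{i\in N}x_i^2$.
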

\begin{proof}
%For each $i\in N$, let $x_i$ and $\widehat{x}_i$ denote the budget of team $i$, before and after $R^1$ is implemented. 
As mentioned above, in the proofs that follow we shall ignore the constant in the definition of the HHI for being innocuous in our analysis. Then,
$$
H(X^{1,E})=\frac{(x_1+E)^2+\sum_{i\in N\setminus\{1\}}x_i^2}{(x+E)^2}.
$$
Thus, $H(X^{1,E})\ge H(X)$ if and only if
$$
\frac{(x_1+E)^2+\sum_{i\in N\setminus\{1\}}x_i^2}{(x+E)^2}\ge\frac{\sum_{i\in N}x_i^2}{x^2}.
$$
That is, 
$$
%x^2\left(2Ex_1+E^2+\sum_{i\in N}x_i^2\right)=
x^2\left((x_1+E)^2+\sum_{i\in N\setminus\{1\}}x_i^2\right)\ge(x+E)^2 \sum_{i\in N}x_i^2.
$$
Or, equivalently, 
$$
(2x_1+E)x^2\ge(2x+E)\sum_{i\in N}x_i^2,
$$
which is true as $x_1\ge x_i$, for each $i\in N$.%\footnote{The formal proof is tedious but doable, based on my handwritten notes.}
\end{proof}
\medskip

We now consider an alternative rule in which we move away from top club-based to domestic competition-based performance-pay. 
%An interesting policy implication of our analysis is that the prize structure of the international competition should be modified. As performance-pay is a deeply rooted principle in sports, a likely suggestion is to modify the structure from club-based to domestic competition-based. 
That is, instead of just rewarding the top club, an equal-sharing of the international competition prize is imposed among all clubs in the domestic competition. Formally, assume that rule $R^n$ is such that $R_i^n(E)=\frac{E}{n}$ for each $i\in N$. 

\begin{proposition}\label{prizes2}
The equal-sharing structure of the international competition prize among clubs in the domestic competition $R^n$ improves competitive balance at domestic competitions.  That is,  $H(X^{n,E}) \le H(X)$.
\end{proposition}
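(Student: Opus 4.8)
The plan is to mirror the computation used in Proposition \ref{prizes}. Under $R^n$ each club's adjusted budget becomes $x_j^{n,E} = x_j + \frac{E}{n}$, and the aggregate budget is again $x + E$. Writing $S = \sum_{i\in N} x_i^2$ and expanding the square, the numerator of $H(X^{n,E})$ becomes $S + \frac{2E}{n}x + \frac{E^2}{n}$. The desired inequality $H(X^{n,E}) \le H(X)$ is then equivalent, after cross-multiplying by the positive denominators $x^2$ and $(x+E)^2$, to
$$
x^2\left(S + \tfrac{2E}{n}x + \tfrac{E^2}{n}\right) \le (x+E)^2 S.
$$

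Next I would cancel the common term $x^2 S$ from both sides and factor out $E$ (the case $E=0$ being trivial, as it yields equality). A short rearrangement collects the remaining terms into the factored form
$$
(2x+E)\left(S - \tfrac{x^2}{n}\right) \ge 0.
$$
Since $2x + E > 0$, the whole statement reduces to verifying $S - \frac{x^2}{n} \ge 0$, i.e. $\sum_{i\in N} x_i^2 \ge \frac{1}{n}\left(\sum_{i\in N} x_i\right)^2$.

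This last inequality is exactly the Cauchy--Schwarz inequality (equivalently, the fact that the quadratic mean dominates the arithmetic mean, or the convexity of $t \mapsto t^2$), and it holds for any nonnegative budgets, with equality precisely when all the $x_i$ coincide. I expect the only step requiring genuine care to be the bookkeeping in the algebraic reduction to the factored form; once that factorization is in hand the conclusion is immediate. It is worth noting the structural symmetry with Proposition \ref{prizes}: there the analogous reduction produced $(2x_1 + E)x^2 \ge (2x+E)S$ and relied on $x_1$ being the largest budget, whereas here the equal split instead generates a comparison between $S$ and the average $\frac{x^2}{n}$, which is why Cauchy--Schwarz is the natural tool rather than maximality of a single term.
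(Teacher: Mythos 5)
Your proposal is correct and follows essentially the same route as the paper: both expand the numerator, cancel the common terms, and reduce the claim to $n\sum_{i\in N}x_i^2 \ge x^2$, which the paper justifies by Jensen's inequality applied to $\varphi(y)=y^2$ and you justify by Cauchy--Schwarz (you note yourself these are equivalent here). The algebraic bookkeeping in your factored form $(2x+E)\bigl(S-\tfrac{x^2}{n}\bigr)\ge 0$ checks out.
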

\begin{proof}
We have
$$
H(X^{n,E})=\frac{\sum_{i\in N}(x_i+\frac{E}{n})^2}{(x+E)^2}.
$$
Then, $H(X^{n,E})\le H(X)$ if and only if
$$
%\frac{\sum_{i\in N}(x_i+\frac{E}{n})^2}{(x+E)^2}=
\frac{\sum_{i\in N}x_i^2}{(x+E)^2}+\frac{E^2}{n(x+E)^2}+\frac{2EX}{n(x+E)^2}\le\frac{\sum_{i\in N}x_i^2}{x^2}.
$$
That is, 
$$
\sum_{i\in N}x_i^2+\frac{E^2}{n}+\frac{2Ex}{n}\le \left(\frac{x+E}{x}\right)^2\sum_{i\in N}x_i^2.
$$
Or, equivalently, 
$$
\frac{E^2}{n}+\frac{2Ex}{n}\le %\left(\left(1+\frac{E}{x}\right)^2-1\right)\sum_{i\in N}x_i^2=\left(\frac{E^2}{x^2}+\frac{2E}{x}\right)\sum_{i\in N}x_i^2=\
\frac{E}{x^2}\left(R+2x\right)\sum_{i\in N}x_i^2,
$$
which translates into
$$
x^2\le n\sum_{i\in N}x_i^2.
$$
As $\varphi(\frac{\sum_i x_i}{n})\le\frac{\sum_i \varphi(x_i)}{n}$, for each convex function $\varphi$ (this is the so-called Jensen inequality), by setting $\varphi(y)=y^2$, we confirm the validity of the last inequality.\end{proof}
\medskip
%A similar argument to the one at the proof of Proposition 1 allows us to prove Proposition 2. 

After examining the two extreme cases, where $k=1$ and $k=n$, we turn to the intermediate rules $R^k$ in which  the $k$-top clubs in each domestic competition are (equally) rewarded, with $1 \le k \le n$, while the others receive zero.  We first establish that while the effect of awards on competitive balance is not uniform,  there are necessary and sufficient conditions that identify when competitive balance is reduced or increased. 
%An interesting policy implication of our analysis is that the prize structure of the international competition should be modified. As performance-pay is a deeply rooted principle in sports, a likely suggestion is to modify the structure from club-based to domestic 

\begin{lemma}\label{prizesk}
The $k$-top performance structure of the international competition prize among clubs in the domestic competition $R^k$ improves competitive balance at domestic competitions for some configurations. More precisely,  $H(X^{k,E})\le H(X)$  if and only if the following condition holds:
$$
x^2\left(E+2\sum_{i=1}^k x_i\right)\le k\left(E+2x\right)\sum_{i\in N}x_i^2.
$$
\end{lemma}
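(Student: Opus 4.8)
The plan is to follow the same algebraic template established in Propositions~\ref{prizes} and~\ref{prizes2}, now carried out for a general $k$ with $1\le k\le n$. First I would write down the distribution $X^{k,E}$ explicitly: the top $k$ clubs have budgets $x_i+\frac{E}{k}$ for $i\in\{1,\dots,k\}$, while the remaining $n-k$ clubs keep their budgets $x_i$. Hence
$$
H(X^{k,E})=\frac{\sum_{i=1}^k\left(x_i+\frac{E}{k}\right)^2+\sum_{i=k+1}^n x_i^2}{(x+E)^2}.
$$
Expanding the squared terms in the numerator gives $\sum_{i\in N}x_i^2+\frac{2E}{k}\sum_{i=1}^k x_i+\frac{E^2}{k}$, where the last term arises because $k\cdot\left(\frac{E}{k}\right)^2=\frac{E^2}{k}$.

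Next I would set up the inequality $H(X^{k,E})\le H(X)$ and clear denominators by cross-multiplying with the positive quantities $x^2$ and $(x+E)^2$. This reduces the claim to
$$
x^2\left(\sum_{i\in N}x_i^2+\frac{2E}{k}\sum_{i=1}^k x_i+\frac{E^2}{k}\right)\le(x+E)^2\sum_{i\in N}x_i^2.
$$
The $x^2\sum_{i\in N}x_i^2$ term cancels against the leading piece of $(x+E)^2\sum x_i^2=(x^2+2Ex+E^2)\sum x_i^2$, leaving
$$
x^2\left(\frac{2E}{k}\sum_{i=1}^k x_i+\frac{E^2}{k}\right)\le(2Ex+E^2)\sum_{i\in N}x_i^2.
$$
Since $E>0$, I would divide through by $E$ and then multiply by $k$ to obtain exactly
$$
x^2\left(2\sum_{i=1}^k x_i+E\right)\le k\left(2x+E\right)\sum_{i\in N}x_i^2,
$$
which is the stated condition. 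Because every manipulation is an equivalence (the factors $x^2$, $(x+E)^2$, $E$, and $k$ are all strictly positive), the ``if and only if'' follows immediately.

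The main obstacle here is not analytical depth but bookkeeping: one must track the coefficient $\frac{E^2}{k}$ correctly (it is $\frac{E^2}{k}$ rather than $E^2$ because squaring $\frac{E}{k}$ and summing $k$ times yields one factor of $k$ in the denominator), and one must verify that the terms which cancel do so exactly so that the final inequality is genuinely equivalent to the original rather than merely sufficient. It is also worth noting as a sanity check that specializing to $k=1$ recovers the condition $x^2(E+2x_1)\le(E+2x)\sum_{i\in N}x_i^2$ of Proposition~\ref{prizes} (with the inequality reversed, as that proposition treats the opposite direction), and specializing to $k=n$ recovers $x^2\le n\sum_{i\in N}x_i^2$ of Proposition~\ref{prizes2}; these two endpoints confirm the algebra is consistent with the already-established extreme cases.
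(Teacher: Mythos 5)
Your proposal is correct and follows essentially the same route as the paper's own proof: expand the numerator of $H(X^{k,E})$, cross-multiply against $H(X)$, cancel the common $x^2\sum_{i\in N}x_i^2$ term, and divide out the positive factors $E$ and $1/k$ to reach the stated equivalence. The bookkeeping of the $\frac{E^2}{k}$ term and the endpoint checks at $k=1$ and $k=n$ are all consistent with the paper.
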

\begin{proof}
We have 
$$
H(X^{k,E})=\frac{\left(\sum_{i=1}^k(x_i+\frac{E}{k})^2\right)+\sum_{i=k+1}^n x_i^2}{(x+E)^2}=\frac{\sum_{i=1}^n x_i^2}{(x+E)^2}+\frac{E^2}{k(x+E)^2}+\frac{2E\sum_{i=1}^k x_i}{k(x+E)^2}.
$$
Thus, $H(X^{k,E})\le H(X)$ if and only if
$$
\frac{\sum_{i\in N} x_i^2}{(x+E)^2}+\frac{E^2}{k(x+E)^2}+\frac{2E\sum_{i=1}^k x_i}{k(x+E)^2}\le\frac{\sum_{i\in N}x_i^2}{x^2}.
$$
That is, 
$$
\sum_{i\in N}x_i^2+\frac{E^2}{k}+\frac{2E\sum_{i=1}^k x_i}{k}\le \left(\frac{x+E}{x}\right)^2\sum_{i\in N}x_i^2,
$$
Or, equivalently, 
$$
\frac{E}{k}\left(E+2\sum_{i=1}^k x_i\right)%=\frac{E^2}{k}+\frac{2E\sum_{i=1}^k x_i}{k}
\le %\left(\left(1+\frac{E}{x}\right)^2-1\right)\sum_{i\in N}x_i^2=\left(\frac{E^2}{x^2}+\frac{2E}{x}\right)\sum_{i\in N}x_i^2=
\frac{E}{x^2}\left(E+2x\right)\sum_{i\in N}x_i^2,
$$
which translates into
$$
x^2\left(E+2\sum_{i=1}^k x_i\right)\le k\left(E+2x\right)\sum_{i\in N}x_i^2,
$$
as stated. 
\end{proof}
%What can we say for the intermediate values of $k$?

As shown in Propositions 1 and 2, the inequality in the statement of Lemma 1 holds for $k=n$, whereas it reverses for $k=1$. The next lemma helps establishing what happens in the intermediate cases. 
%Now let $R>0$ be a prize. Consider first a mechanism that awards $R/k$ to $k$ leading teams. That is,
Recall that:
\[H(X^{k,E})=\frac{\sum_{i=1}^k (x_i+\frac{E}{k}) ^2 + \sum_{i=k+1}^n x_i^2}{(x+E)^2}.\]
\begin{lemma}\label{monot}
    For each $X$ and each $E$, the function  $H(X^{k,E})$  is decreasing in $k$. %That is, the competitiveness is increasing  if the prize is shared among larger number of teams.  
\end{lemma}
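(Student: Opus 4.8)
The plan is to exploit the fact that the denominator $(x+E)^2$ in the displayed expression for $H(X^{k,E})$ does not depend on $k$, so the behaviour in $k$ is governed entirely by the numerator. Starting from the decomposition already obtained in the proof of Lemma~\ref{prizesk}, I would write
$$
H(X^{k,E})=\frac{\sum_{i\in N} x_i^2}{(x+E)^2}+\frac{E}{(x+E)^2}\cdot g(k), \qquad g(k)=\frac{E+2\sum_{i=1}^k x_i}{k}.
$$
The first summand is a constant independent of $k$, and $E>0$ together with $(x+E)^2>0$ means the sign of the variation is dictated by $g$. Hence it suffices to show that $g(k)$ is decreasing in $k$.

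Since $k$ ranges over the integers $\{1,\dots,n\}$, I would argue by comparing consecutive values rather than by differentiating. Writing $S_k=\sum_{i=1}^k x_i$, the inequality $g(k+1)\le g(k)$ becomes, after clearing the positive denominators $k$ and $k+1$,
$$
k\left(E+2S_{k+1}\right)\le (k+1)\left(E+2S_k\right).
$$
Expanding both sides with $S_{k+1}=S_k+x_{k+1}$ and cancelling the common terms $kE+2kS_k$ reduces this to the single inequality
$$
2k\,x_{k+1}\le E+2S_k.
$$

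The final and decisive step is to verify this reduced inequality from the budget ordering $x_1\ge x_2\ge\cdots\ge x_n$. Because $x_i\ge x_{k+1}$ for every $i\le k$, summing over $i=1,\dots,k$ gives $S_k\ge k\,x_{k+1}$, hence $2S_k\ge 2k\,x_{k+1}$; adding the nonnegative quantity $E$ preserves the inequality and yields exactly $2k\,x_{k+1}\le E+2S_k$. This establishes $g(k+1)\le g(k)$ for each admissible $k$, and therefore that $H(X^{k,E})$ is decreasing in $k$, as claimed.

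I do not anticipate a genuine obstacle here: the only care needed is to isolate the truly $k$-dependent term correctly and to notice that the ordering of the budgets is used only at the very last reduced inequality. Treating $k$ as discrete (comparing $k$ with $k+1$) is cleaner than a continuous relaxation, since the partial sum $S_k$ is intrinsically a step function of $k$ and a derivative-based argument would require an artificial interpolation of $\sum_{i=1}^k x_i$.
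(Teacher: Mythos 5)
Your proposal is correct and follows essentially the same route as the paper: both isolate the $k$-dependent term $\frac{E+2\sum_{i=1}^k x_i}{k}$ and show it decreases by comparing consecutive values of $k$, the paper phrasing the final step as ``the average budget of the top $k-1$ clubs exceeds that of the top $k$ clubs'' while you clear denominators and use $\sum_{i=1}^k x_i\ge k\,x_{k+1}$ — the same fact. Your write-up is, if anything, slightly more explicit; note only that since $E>0$ the reduced inequality is in fact strict, matching the strict decrease the paper asserts.
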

\begin{proof}
Let us show that for each $k$, $H(X^{k-1,E})> H(X^{k,E})$. 
By ignoring  common terms and factors on both sides, it suffices to show that
\[
\sum_{i=1}^{k-1} (x_i + \frac{E}{k-1})^2 +x_k^2 > \sum_{i=1}^{k} (x_i + \frac{E}{k})^2,
\]
or
\[
\frac{ 2E \sum_{i=1}^{k-1} x_i }{k-1} + \frac{E^2}{k-1}> \frac{2E \sum_{i=1}^{k} x_i}{k}+ \frac{E^2}{k}.
\]
The latter inequality  holds in our framework as the average budget of the  top  $k-1$ clubs exceeds that 
of the top $k$ clubs.%$\Box$
\end{proof}
\medskip

To summarize, Lemma \ref{monot} implies that $H(X^{k,E})$ is decreasing in $k$. Propositions 1 and 2 show that $H(X^{n,E})\le H(X)\le H(X^{1,E})$. Thus, it follows that there exists $k^{\ast} (X,E) \in\{1,\ldots,n\}$ such that $H(X^{n,E})\le H(X^{k^{\ast}+1,E}) \le H(X)\le H(X^{k^{\ast},E})\le H(X^{1,E})$. In other words, competitive balance of the status-quo budget distribution (with no prize allocation) single crosses the sequence of resulting budget distributions after implementing the $R^k$ rules.\footnote{The so-called \textit{single-crossing property} has some important implications, well established in the public economics literature, referring to the progressivity comparisons of schedules as well as the identification of the majority voting equilibrium (e.g., Jakobsson, 1976; Hemming and Keen, 1983; Gans and Smart 1996; Moreno-Ternero, 2011; Berganti\~{n}os and Moreno-Ternero, 2023b).} 
Equivalently, for each status-quo budget distribution $X$ there exists a threshold $k^{\ast}(X,E)$, such that $H(X^{1,E})>H(X^{2,E})>...>H(X^{k^{\ast}-1},E)>H(X)>H(X^{k^{\ast},E})>...>H(X^{n,E})$. In other words, we have the following: 
\begin{proposition}\label{prizes3}
For each budget distribution $X$ there exists a threshold $k^{\ast}(X,E)\in\{1,\ldots n\}$, such that %$H(X^1)>H(X^2)>...>H(X^{k^{\ast}-1})>H(X)>H(X^{k^{\ast}})>...>H(X^n)$ The 
\begin{itemize}
\item For each $k=1,\ldots, k^{\ast}(X,E)-1$, the $k$-top performance structure of the international competition prize among clubs in the domestic competition $R^k$ hurts competitive balance at domestic competitions (with respect to the original distribution $X$).
\item For each $k=k^{\ast}(X,E),\ldots, n$, the $k$-top performance structure of the international competition prize among clubs in the domestic competition $R^k$ improves competitive balance at domestic competitions (with respect to the original distribution $X$).
\end{itemize}
\end{proposition}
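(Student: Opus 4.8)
The plan is to assemble the three ingredients already in hand---the strict monotonicity of $H(X^{k,E})$ in $k$ from Lemma~\ref{monot}, together with the two endpoint inequalities $H(X^{1,E})\ge H(X)$ from Proposition~\ref{prizes} and $H(X^{n,E})\le H(X)$ from Proposition~\ref{prizes2}---into a discrete single-crossing argument. Since those results carry the analytic load, the remaining task is purely combinatorial: to locate the unique index at which the decreasing sequence $\{H(X^{k,E})\}_{k=1}^{n}$ passes the reference level $H(X)$, and to read off the two required sign patterns from that location.

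Concretely, first I would define the threshold as
\[
k^{\ast}(X,E)=\min\bigl\{\,k\in\{1,\ldots,n\}: H(X^{k,E})\le H(X)\,\bigr\}.
\]
This set is nonempty because, by Proposition~\ref{prizes2}, the index $k=n$ always belongs to it; hence the minimum is well defined and $k^{\ast}(X,E)\le n$, while trivially $k^{\ast}(X,E)\ge 1$, so $k^{\ast}(X,E)\in\{1,\ldots,n\}$ as claimed.

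Next I would verify the two bullet points directly from this definition. For the second bullet, fix any $k\ge k^{\ast}(X,E)$: by the definition of the threshold we have $H(X^{k^{\ast},E})\le H(X)$, and by the monotonicity of Lemma~\ref{monot} we have $H(X^{k,E})\le H(X^{k^{\ast},E})$; chaining these two yields $H(X^{k,E})\le H(X)$, so $R^k$ improves competitive balance. For the first bullet, fix any $k\le k^{\ast}(X,E)-1$: minimality of $k^{\ast}(X,E)$ forces $H(X^{k,E})>H(X)$, since otherwise $k$ would itself belong to the defining set while being strictly smaller than its minimum---a contradiction. Hence $R^k$ hurts competitive balance.

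I expect essentially no analytic obstacle to remain; the only points deserving care are bookkeeping ones. I would note that the first range $\{1,\ldots,k^{\ast}-1\}$ may be empty, precisely when $k^{\ast}=1$, which by Proposition~\ref{prizes} can occur only in the boundary configuration $H(X^{1,E})=H(X)$; in that case the first bullet holds vacuously, whereas the second range is always nonempty because $k^{\ast}\le n$. I would also take care to argue throughout with the weak inequalities that define ``hurts'' and ``improves'' in the proposition statements, rather than with the strict chain displayed informally in the preceding paragraph, so that any degenerate configuration in which $H(X)$ happens to coincide with some $H(X^{k,E})$ is still covered correctly.
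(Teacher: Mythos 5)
Your proposal is correct and follows essentially the same route as the paper, which also derives the proposition by combining the monotonicity of $H(X^{k,E})$ in $k$ (Lemma~\ref{monot}) with the endpoint inequalities from Propositions~\ref{prizes} and~\ref{prizes2} in a single-crossing argument. Your explicit definition of $k^{\ast}$ as a minimum and your care with weak inequalities merely formalize what the paper states informally in the paragraph preceding the proposition.
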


As stated in Proposition 3, there will be a threshold for $k$ (depending on the distribution $X$ and prize $E$) separating the $k$-top rules that hurt competitive balance at domestic competitions from those that improve it. The $k$-top rules are fully ranked according to that feature, ranging from the $1$-top rule (that hurts competitive balance for all distribution $X$) to the $n$-top (equal-sharing) rule (that always improves competitive balance for all distribution $X$). 

%We will turn to  further examination of our results.

\section{Extension and further analysis} 

An interesting corollary from the above discussion is that competitive balance is reduced  much more in ``weak leagues", i.e., leagues for which only one club participates in the international competition.\footnote{As mentioned in Section 2, some leagues might end up having no participant whatsoever in the group stage of the UEFA Champions League, but all leagues have at least one slot in the qualifying tournament.} For those leagues, $R^1$ is the only available $k$-top rule. And we know from Proposition 1 that such a rule always hurts competitive balance, whereas other $k$-top rules might not. As stated in Lemma 2, the more clubs of a league benefit from the prize of the international competition, the better competitive balance in the domestic competition is. In other words: 

\begin{corollary}\label{small-leagues}
Competitive balance in strong domestic leagues with more spots in the international competition is not hurt as much as in weak leagues with just one spot in the international competition.
\end{corollary}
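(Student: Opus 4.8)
The plan is to derive this corollary as an almost immediate consequence of Proposition \ref{prizes} and Lemma \ref{monot}, once the informal statement is given a precise formalization. First I would make explicit the identification between the number of international slots a league enjoys and the sharing rule to which it is subjected: a weak league with a single slot is governed by the top-only rule $R^1$, whereas a strong league with $k>1$ slots is governed by the equal-share rule $R^k$ among its top $k$ clubs. The natural quantity measuring the harm to competitive balance induced by the award scheme is the change $H(X^{k,E})-H(X)$, where a positive value signals a deterioration of balance.

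The core step is then to invoke Lemma \ref{monot}, which states that $H(X^{k,E})$ is decreasing in $k$ for each fixed $X$ and $E$. In particular, for every $k\ge 1$ one has $H(X^{k,E})\le H(X^{1,E})$, and hence $H(X^{k,E})-H(X)\le H(X^{1,E})-H(X)$, since the status-quo term $H(X)$ is common to both expressions. Thus, holding the underlying budget profile fixed, the distortion inflicted by distributing the prize among $k$ clubs is weakly smaller than the distortion inflicted by concentrating it on the single top club. Proposition \ref{prizes} anchors the comparison by guaranteeing that the weak-league benchmark is genuinely harmful, $H(X^{1,E})-H(X)\ge 0$, so that the weak league always suffers a nonnegative loss of balance which is the largest among all $k$-top rules.

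The main obstacle is not computational but interpretive: the corollary is phrased as a comparison across distinct leagues, which in principle carry distinct budget distributions, whereas Lemma \ref{monot} compares different rules applied to one fixed distribution $X$. I would resolve this by framing the statement as a ceteris-paribus claim on the distortion as a function of the number of award recipients: for any given budget structure, moving from one slot to several slots can only attenuate (and, past the threshold $k^{\ast}(X,E)$ of Proposition \ref{prizes3}, reverse) the loss of competitive balance. Once this fixed-$X$ reading is adopted, no further calculation is required, and the corollary follows directly by chaining the monotonicity of Lemma \ref{monot} with the anchoring inequality of Proposition \ref{prizes}.
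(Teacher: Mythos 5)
Your proposal is correct and follows essentially the same route as the paper, which justifies the corollary informally by combining Proposition \ref{prizes} (the one-slot rule $R^1$ always hurts balance) with the monotonicity of $H(X^{k,E})$ in $k$ from Lemma \ref{monot}. Your explicit ceteris-paribus framing of the cross-league comparison as a fixed-$X$ statement is a slight refinement of the paper's wording, but not a different argument.
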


We now generalize $k$-top performance schemes to allow for uneven rewards among top clubs. Formally, let $k\in \{1,\ldots, n\}$ and $a\in \Bbb{R}^k$ be such that $a_1\ge a_2\ge\ldots\ge a_k>0$, $a_1>\frac{1}{k}$, and $\sum_{i=1}^{k} a_i=1$. Then, the rule $R^{a,k}$ is such that $R^{a,k}_i(E)=a_i E$ for each $i\in \{1,\ldots, k\}$ and $0$ otherwise. Our next result states that the even $k$-top performance scheme outperforms (in terms of competitive balance for the domestic competition) all the uneven $k$-top performance schemes.

\begin{proposition}\label{uneven-k}
The even $k$-top performance structure of the international competition prize among clubs in the domestic competition $R^k$ improves competitive balance at domestic competitions more than any other uneven $k$-top performance structure. 
\end{proposition}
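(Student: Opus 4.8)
The plan is to compare $H(X^{k,E})$ with $H(X^{a,k,E})$, where $X^{a,k,E}=(x_1+a_1E,\ldots,x_k+a_kE,x_{k+1},\ldots,x_n)$ denotes the distribution induced by the uneven rule $R^{a,k}$. First I would observe that both rules distribute exactly $E$ (since $\sum_{i=1}^k a_i=1$), so the two perturbed distributions share the same aggregate budget $x+E$. Hence the denominator $(x+E)^2$ is common to both indices, and proving $H(X^{k,E})\le H(X^{a,k,E})$ is equivalent to comparing the numerators. Moreover, the coordinates $k+1,\ldots,n$ are left untouched by either scheme, so the tail $\sum_{i=k+1}^n x_i^2$ cancels and it suffices to establish
\[
\sum_{i=1}^k \Big(x_i+\tfrac{E}{k}\Big)^2 \;\le\; \sum_{i=1}^k (x_i+a_iE)^2 .
\]

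Next I would expand both squares and cancel the common term $\sum_{i=1}^k x_i^2$. After dividing through by $E>0$ and rearranging, the inequality reduces to
\[
2\Big(\sum_{i=1}^k a_i x_i-\tfrac{1}{k}\sum_{i=1}^k x_i\Big)+E\Big(\sum_{i=1}^k a_i^2-\tfrac{1}{k}\Big)\;\ge\;0 .
\]
The statement therefore splits into two nonnegativity claims, one for each bracket, and I would verify them separately.

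The second bracket is controlled by the Cauchy--Schwarz inequality (equivalently, the quadratic-mean/arithmetic-mean inequality): since $\sum_{i=1}^k a_i=1$, one has $\sum_{i=1}^k a_i^2\ge\frac{1}{k}\big(\sum_{i=1}^k a_i\big)^2=\frac{1}{k}$, and the hypothesis $a_1>\frac{1}{k}$ makes this strict. The first bracket is where the ordering assumptions are essential: because the weights $a_1\ge\cdots\ge a_k$ and the budgets $x_1\ge\cdots\ge x_k$ are sorted in the same direction, Chebyshev's sum inequality yields $\frac{1}{k}\sum_{i=1}^k a_i x_i\ge\big(\frac{1}{k}\sum_{i=1}^k a_i\big)\big(\frac{1}{k}\sum_{i=1}^k x_i\big)=\frac{1}{k}\cdot\frac{1}{k}\sum_{i=1}^k x_i$, i.e. $\sum_{i=1}^k a_i x_i\ge\frac{1}{k}\sum_{i=1}^k x_i$. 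With both brackets nonnegative the displayed inequality holds (in fact strictly), completing the argument.

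The only genuinely non-routine step is the first bracket. The mechanical expansion does not by itself reveal why front-loading the prize onto the strongest clubs cannot help balance; one must exploit the co-monotonicity of $(a_i)$ and $(x_i)$. I expect recognizing this as an instance of Chebyshev's sum inequality --- or, if a self-contained proof is preferred, establishing it directly by summation by parts, using $a_i\ge a_{i+1}$ together with the partial sums of the $x_i$ --- to be the crux of the proof; the remaining manipulations are routine algebra.
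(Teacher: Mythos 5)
Your proof is correct and follows essentially the same route as the paper: both arguments cancel the common denominator $(x+E)^2$ and the untouched tail, expand the squares, and reduce the claim to the two nonnegativity statements $E\left(\sum_{i=1}^k a_i^2-\tfrac{1}{k}\right)\ge 0$ (via Jensen/Cauchy--Schwarz) and $\sum_{i=1}^k a_i x_i\ge\tfrac{1}{k}\sum_{i=1}^k x_i$. The only difference lies in the latter step, where the paper uses an explicit decomposition $\varepsilon_i=\tfrac{1}{k}-a_i$ that tacitly assumes $a_i\le\tfrac{1}{k}$ for all $i\ge 2$, whereas your appeal to Chebyshev's sum inequality (or summation by parts) is equally valid and in fact also covers configurations in which some $a_i$ with $i\ge 2$ exceeds $\tfrac{1}{k}$, which the paper's argument glosses over.
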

\begin{proof}
For each $i\in N$, let $x^k_i=x_i+\frac{E}{k}$ and $x^{a,k}_i=x_i+a_iE$ denote the budget of club $i$, after $R^k$ and $R^{a,k}$ are implemented. Then, 
$$
H(X^k)=\frac{\sum_{i=1}^k(x_i+\frac{E}{k})^2+\sum_{i=k+1}^n x_i^2}{(x+E)^2}=\frac{\sum_{i=1}^k x_i^2}{(x+E)^2}+\frac{E^2}{k(x+E)^2}+\frac{2E\sum_{i=1}^k x_i}{k(x+E)^2}+\frac{\sum_{i=k+1}^n x_i^2}{(x+E)^2},
$$
and 
$$
H(X^{a,k})=\frac{\sum_{i=1}^k(x_i+a_iE)^2+\sum_{i=k+1}^n x_i^2}{(x+E)^2}=\frac{\sum_{i=1}^k x_i^2}{(x+E)^2}+\frac{E^2\sum_{i=1}^k a^2_i}{(x+E)^2}+\frac{2E\sum_{i=1}^k a_ix_i}{(x+E)^2}+\frac{\sum_{i=k+1}^n x_i^2}{(x+E)^2},
$$
Thus, $H(X^k)\le H(X^{a,k})$ if and only if
$$
\frac{\sum_{i=1}^k x_i^2}{(x+E)^2}+\frac{E^2}{k(x+E)^2}+\frac{2E\sum_{i=1}^k x_i}{k(x+E)^2}\le\frac{\sum_{i=1}^k x_i^2}{(x+E)^2}+\frac{E^2\sum_{i=1}^k a^2_i}{(x+E)^2}+\frac{2E\sum_{i=1}^k a_ix_i}{(x+E)^2}.
$$
That is, 
$$
\frac{E^2}{k}+\frac{2E\sum_{i=1}^k x_i}{k}\le E^2\sum_{i=1}^k a^2_i+2E\sum_{i=1}^k a_ix_i,
$$
Or, equivalently, 
$$
\frac{E}{k}\left(E+2\sum_{i=1}^k x_i\right)\le E\left(E\sum_{i=1}^k a^2_i+2\sum_{i=1}^k a_ix_i\right),
$$
which translates into
%$$
%E+2\sum_{i=1}^k x_i\le k\left(E\sum_{i=1}^k a^2_i+2\sum_{i=1}^k a_ix_i\right).
%$$
$$
E\left(\sum_{i=1}^k a^2_i-\frac{1}{k}\right)\ge 2\left(\sum_{i=1}^k \left(\frac{1}{k}-a_i\right)x_i\right).
$$
Note that the left hand side is non-negative because of the Jensen inequality.\footnote{If $\varphi$ is convex then $\varphi(\frac{\sum_i a_i}{k})\le\frac{\sum_i \varphi(a_i)}{k}$. Thus, for $\varphi(x)=x^2$, we have $(\frac{1}{k})^2=(\frac{\sum_i a_i}{k})^2\le\frac{\sum_i a^2_i}{k}$, i.e., $\sum_{i=1}^k a^2_i\ge\frac{1}{k}$.} And the right-hand side is actually non-positive because $x_1\ge x_2\ge\ldots\ge x_k$, $a_1\ge a_2\ge\ldots\ge a_k>0$, $a_1>\frac{1}{k}$, and $\sum_{i=1}^{k} a_i=1$. More precisely, let $\varepsilon=a_1-\frac{1}{k}>0$ and, for each $i=2,\ldots k$, let $\varepsilon_i=\frac{1}{k}-a_i>0$. Note that $0<\varepsilon_1\le \varepsilon_2\le\ldots\le \varepsilon_k$, and $\sum_{i=2}^{k} \varepsilon_i=\varepsilon$. Then, $\sum_{i=1}^k \left(\frac{1}{k}-a_i\right)x_i=-\varepsilon x_1+\sum_{i=2}^{k} \varepsilon_i x_i=\sum_{i=2}^{k} \varepsilon_i (x_i-x_1)\le 0$.
\end{proof}
\bigskip

We complete this section upon exploring the impact of the value of the award $E$ on the level of competitive balance, as well as the threshold reflecting the single-crossing property   of $k$-top schemes' competitive balance.  We first claim that for every budget distribution $X$ and the number of clubs $k$ receiving the award, competitive balance as a function of $E$ is single-peaked. That is, there is a threshold $E^{\ast} (X,k)$ that depends on $X$ and $k$ such that the level of competitive balance increases for $E<E^{\ast} (X,k)$ and declines for $E>E^{\ast} (X,k)$. %The conclusion is quite intuitive: for large $E$ the magnitude of award dominates clubs' budgets making the budget distribution more equitable, which is not the case for smaller $E$. Formally,
\begin{proposition}\label{prizes4}
Denote
$$
E^{\ast} (X,k)=\max\left\{0,\frac{k\sum_{i=1}^n x_i^2-x\sum_{i=1}^k x_i}{\sum_{i=k+1}^n x_i}\right\}.
$$
%If $E^{\ast} (X,k)=0$, then $H(X^{k,E})$ is increasing for all $E\ge 0$. If $E^{\ast} (X,k)>0$, 
Then, $H(X^{k,E})$ is decreasing for all $0\le E\le E^{\ast} (X,k)$ and increasing for all $E\ge E^{\ast} (X,k)$. 
% How does $E$ impact $H(X^{k,E})$?\\
\end{proposition}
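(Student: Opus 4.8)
The plan is to treat $H(X^{k,E})$ as a smooth one‑variable function of the prize $E\ge 0$ and to read off its shape from the sign of its derivative. First I would expand the numerator of the recalled expression. Writing $S=\sum_{i\in N}x_i^2$ and $P=\sum_{i=1}^{k}x_i$, and recalling $x=\sum_{i\in N}x_i$, the identity $\sum_{i=1}^k\big(x_i+\tfrac{E}{k}\big)^2=\sum_{i=1}^k x_i^2+\tfrac{2EP}{k}+\tfrac{E^2}{k}$ gives $H(X^{k,E})=\dfrac{f(E)}{(x+E)^2}$ with $f(E)=S+\tfrac{2EP}{k}+\tfrac{E^2}{k}$, i.e. a quadratic in $E$ over a perfect square. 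This is the form on which calculus acts cleanly.

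Next I would differentiate using the quotient rule with $g(E)=(x+E)^2$ and $g'(E)=2(x+E)$. After cancelling one factor of $(x+E)>0$, the sign of $H'(E)$ coincides with the sign of $N(E):=f'(E)(x+E)-2f(E)$, where $f'(E)=\tfrac{2(P+E)}{k}$. The crucial observation — and the actual content of the proposition — is that upon expanding $N(E)$ the $E^2$ terms cancel, leaving an affine function of $E$: one computes $N(E)=\tfrac{2}{k}\big(EQ+Px-kS\big)$, where $Q:=\sum_{i=k+1}^{n}x_i=x-P\ge 0$.

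From here the conclusion is immediate. Since $k>0$ and $Q\ge 0$, the slope of $N$ in $E$ is non‑negative, so $N$ is non‑decreasing and crosses zero from below exactly once, at $E=\tfrac{kS-Px}{Q}$. Thus $H'(E)\le 0$ for $E$ below this root and $H'(E)\ge 0$ above it, which is precisely the decreasing‑then‑increasing (single‑trough) behaviour with minimum at $E^{\ast}(X,k)$; substituting $S=\sum_{i\in N}x_i^2$, $P=\sum_{i=1}^k x_i$, and $Q=\sum_{i=k+1}^n x_i$ recovers the displayed formula. When $kS-Px\le 0$ the root is non‑positive, so $N(E)\ge 0$ throughout $E\ge 0$ and $H$ increases from the start — exactly the case captured by the $\max\{0,\cdot\}$ truncation.

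The only boundary case is $k=n$, where $Q=0$ and the formula's denominator degenerates. There $N(E)=\tfrac{2}{n}(x^2-nS)$ is a negative constant by Jensen's inequality (the same inequality used in the proof of Proposition 2, since $x^2\le nS$), so $H$ is decreasing for every $E$, consistent with reading $E^{\ast}$ as effectively infinite. I expect the sole genuine obstacle to be verifying the cancellation of the $E^2$ terms in $N(E)$; once that is in hand the remainder is routine, and that cancellation is exactly what collapses the derivative's numerator to an affine function and thereby forces the single‑peaked (inverse‑U for competitive balance) shape.
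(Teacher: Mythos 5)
Your argument is correct and follows essentially the same route as the paper's proof: differentiate the quotient, observe that the $E^2$ terms cancel so the derivative's numerator is affine in $E$ with non-negative slope $\tfrac{2}{k}\sum_{i=k+1}^n x_i$, and read off the unique sign change at $E^{\ast}(X,k)$. Your explicit treatment of the degenerate case $k=n$ (where the denominator vanishes and $H$ is decreasing throughout, by Jensen) is a small but welcome addition that the paper leaves implicit.
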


\begin{proof}
Recall that
$$
H(X^{k,E})=\frac{\left(\sum_{i=1}^k(x_i+\frac{E}{k})^2\right)+\sum_{i=k+1}^n x_i^2}{(x+E)^2}=\frac{\sum_{i=1}^n x_i^2}{(x+E)^2}+\frac{E^2}{k(x+E)^2}+\frac{2E\sum_{i=1}^k x_i}{k(x+E)^2}.
$$
Taking derivatives of this expression with respect to $E$, we obtain
$$
\frac{2}{k(x+E)^3}\left( (x+E)(\sum_{i=1}^k x_i+E)-k\sum_{i=1}^n x_i^2-E^2-2E\sum_{i=1}^k x_i\right).
$$
Or, equivalently,
$$
\frac{2}{k(x+E)^3}\left( E\sum_{i=k+1}^n x_i-k\sum_{i=1}^n x_i^2+x\sum_{i=1}^k x_i\right),
$$
from where it follows that $H(X^{k,E})$ is increasing with respect to $E$ if and only if
$$
E\ge\frac{k\sum_{i=1}^n x_i^2-x\sum_{i=1}^k x_i}{\sum_{i=k+1}^n x_i}.
$$
Note that the numerator of the last expression  $k\sum_{i=1}^n x_i^2-x\sum_{i=1}^k x_i$ could be either negative or positive for various $k$. Indeed, for $k=n$, $n\sum_{i=1}^n x_i^2\ge(\sum_{i=1}^n x_i)^2= x^2$. Whereas for $k=1$, we have $\sum_{i=1}^n x_i^2 - x x_1 = \sum_{i=1}^n [ x_i^2 -x_1 x_i]\le 0.$  Thus, by setting  $E^{\ast} (X,k)=\max\{0,\frac{k\sum_{i=1}^n x_i^2-x\sum_{i=1}^k x_i}{\sum_{i=k+1}^n x_i}\}$, we conclude the proof.
\end{proof}
\bigskip

Finally, how does $E$ impact $k^{\ast}(X,E)$? According to Lemma \ref{prizesk}, $H(X^{k,E})\le H(X)$  if and only
$$
x^2\left(E+2\sum_{i=1}^k x_i\right)\le k\left(E+2x\right)\sum_{i\in N}x_i^2.
$$
Equivalently, 
$$
\left(k\sum_{i=1}^n x_i^2-x^2\right)E\ge 2x\left(x\sum_{i=1}^k x_i-k\sum_{i=1}^n x_i^2\right).
$$

If $k\sum_{i=1}^n x_i^2\ge x^2$, then the condition above trivially holds for all $E$ (the left-hand side would be non-negative whereas the right-hand side would be non-positive). 

If $k\sum_{i=1}^n x_i^2\le x\sum_{i=1}^k x_i$, then the condition above trivially does not hold for any $E$ (the left-hand side would be non-positive whereas the right-hand side would be non-negative). 

If $x\sum_{i=1}^k x_i< k\sum_{i=1}^n x_i^2< x^2$, then the condition is equivalent to write the following:
$$
E\le \frac{2x\left(k\sum_{i=1}^n x_i^2-x\sum_{i=1}^k x_i\right)}{x^2-k\sum_{i=1}^n x_i^2}.
$$
%So, the larger $E$, the less likely the condition $H(X^{k,E})\le H(X)$ is satisfied and, thus, the larger $k^{\ast}(X,E)$ is. 

We now explore the above upper bound for $E$, starting with its monotonicity with respect to $k$.

\begin{lemma}\label{monot2}
    Assume $x\sum_{i=1}^k x_i< k\sum_{i=1}^n x_i^2< x^2$ and denote 
    $$
    \hat{E}(k,X)=\frac{2x\left(k\sum_{i=1}^n x_i^2-x\sum_{i=1}^k x_i\right)}{x^2-k\sum_{i=1}^n x_i^2}.
    $$
   Then, for each $X$ and each $E$, the function  $\hat{E}(k,X)$  is increasing in $k$. %That is, the competitiveness is increasing  if the prize is shared among larger number of teams.  
\end{lemma}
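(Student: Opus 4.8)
The plan is to reduce the claimed monotonicity to a single scalar inequality that isolates the role of the regime hypothesis. First I would fix notation that makes the $k$-dependence transparent: write $S=\sum_{i=1}^n x_i^2$ and $x=\sum_{i=1}^n x_i$, both independent of $k$, and set $P_k=\sum_{i=1}^k x_i$ and $Q_k=\sum_{i=k+1}^n x_i=x-P_k$. Using the identity $kS-xP_k=-(x^2-kS)+xQ_k$, the threshold rewrites as
$$
\hat{E}(k,X)=\frac{2x^2 Q_k}{x^2-kS}-2x,
$$
so that, since $2x^2>0$ and $-2x$ is a constant, $\hat{E}(\cdot,X)$ is increasing in $k$ if and only if $g(k):=\frac{Q_k}{x^2-kS}$ is. This reformulation is the key bookkeeping step, because it strips away the additive constant and the positive prefactor and leaves a quotient whose numerator and denominator are each \emph{decreasing} in $k$, so that monotonicity is genuinely in doubt.

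Because $k$ ranges over integers, I would establish monotonicity through the forward difference rather than by treating $k$ as a continuous variable (note $P_k$ and $Q_k$ are only defined at integers). It is also worth observing that the set of admissible $k$ cut out by $xP_k<kS<x^2$ is a contiguous range: the upper constraint $kS<x^2$ fails once $k$ is large, while the lower constraint $xP_k<kS$ fails once $k$ is small, so comparing consecutive admissible values $k$ and $k+1$ keeps both $x^2-kS$ and $x^2-(k+1)S$ strictly positive. Substituting $Q_{k+1}=Q_k-x_{k+1}$ and $x^2-(k+1)S=(x^2-kS)-S$ and clearing denominators, a short computation collapses the difference to
$$
g(k+1)-g(k)=\frac{Q_k S-x_{k+1}\left(x^2-kS\right)}{\left(x^2-kS\right)\left(x^2-(k+1)S\right)}.
$$
The denominator is positive on the regime, so everything reduces to showing that the numerator $Q_k S-x_{k+1}(x^2-kS)$ is positive.

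This last inequality is where the regime hypothesis does all the work, and it is the step I expect to be the crux. I would extract two consequences of $xP_k<kS$. First, $x^2-kS<x^2-xP_k=xQ_k$, which bounds the factor $x^2-kS$ from above. Second, the averaging chain $x_{k+1}\le x_k\le \frac{1}{k}P_k<\frac{S}{x}$ — where the middle step uses that $x_k$ is at most the mean of the top $k$ budgets and the final step is exactly $xP_k<kS$ — yields $x\,x_{k+1}<S$. Multiplying these two bounds by the positive quantities $x_{k+1}$ and $Q_k$ respectively gives
$$
x_{k+1}\left(x^2-kS\right)<x_{k+1}\,x\,Q_k<S\,Q_k,
$$
which is precisely the desired numerator positivity. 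The one subtlety to watch is that every factor being multiplied ($x_{k+1}$, $Q_k$, and $x^2-kS$) must be strictly positive for the inequalities to compose in the right direction; in particular $Q_k>0$ requires $k<n$, which is automatic since $kS<x^2\le nS$ forces $k<n$.
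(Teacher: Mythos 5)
Your proof is correct and, despite the tidier bookkeeping (rewriting $\hat{E}(k,X)$ as an affine function of $Q_k/(x^2-kS)$ and taking a forward difference), it reduces to exactly the same inequality as the paper's proof — after cancellation both arguments come down to $S\,Q_k \ge x_{k+1}(x^2-kS)$, with $S=\sum_{i=1}^n x_i^2$ and $Q_k=\sum_{i=k+1}^n x_i$ — and both settle it via the same crux, namely that $x_{k+1}$ is at most the average of the top $k$ budgets, which together with the regime hypothesis $xP_k<kS$ gives $x\,x_{k+1}\le S$. So this is essentially the paper's argument in a slightly different arrangement, with the added (and welcome) care of noting explicitly that both denominators must be positive for consecutive admissible $k$.
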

\begin{proof}
Assume $x\sum_{i=1}^k x_i< k\sum_{i=1}^n x_i^2< x^2$. Let us show that for each $k$, $\hat{E}(k+1,X)\ge\hat{E}(k,X)$. 
That is,
$$
\frac{2x\left((k+1)\sum_{i=1}^n x_i^2-x\sum_{i=1}^{k+1} x_i\right)}{x^2-(k+1)\sum_{i=1}^n x_i^2}\ge \frac{2x\left(k\sum_{i=1}^n x_i^2-x\sum_{i=1}^k x_i\right)}{x^2-k\sum_{i=1}^n x_i^2}.
$$
Equivalently, 
$$
\left(x^2-k\sum_{i=1}^n x_i^2\right)\left((k+1)\sum_{i=1}^n x_i^2-x\sum_{i=1}^{k+1} x_i\right)\ge
\left(x^2-(k+1)\sum_{i=1}^n x_i^2\right) \left(k\sum_{i=1}^n x_i^2-x\sum_{i=1}^k x_i\right).
$$
By ignoring  common terms and factors on both sides, it suffices to show that
\[
\left(x^2-k\sum_{i=1}^n x_i^2\right)\left(\sum_{i=1}^n x_i^2-x x_{k+1}\right) \ge -\left(\sum_{i=1}^n x_i^2\right)\left(k\sum_{i=1}^n x_i^2-x\sum_{i=1}^k x_i\right),
\]
As $x^2\ge k\sum_{i=1}^n x_i^2$ and $k\sum_{i=1}^n x_i^2\ge x\sum_{i=1}^k x_i$ due to the premises of the lemma, it suffices to show that 
\[
\sum_{i=1}^n x_i^2\ge x x_{k+1}=x_{k+1}\sum_{i=1}^n x_i.
\]
Now, the premises of the lemma also imply that $\sum_{i=1}^n x_i^2\ge x \sum_{i=1}^k \frac{x_i}{k}$. As the average budget of the top $k$ clubs exceeds that 
of the $k+1$ club (i.e., $\sum_{i=1}^k \frac{x_i}{k}\ge x_{k+1}$), the above inequality holds, as desired.%$\Box$
\end{proof}
\medskip

Let us then assume that the premises of Lemma 3 hold, i.e., $x\sum_{i=1}^k x_i< k\sum_{i=1}^n x_i^2< x^2$. Then, $k^{\ast}(X,E)$ is the unique value satisfying 
$$
\hat{E}(k^{\ast},X)\le E \le\hat{E}(k^{\ast}+1,X). 
$$
Thus, if $E$ increases then either the above upper bound keeps holding (which means that $k^{\ast}$ remains the same) or not (which means that $k^{\ast}$ increases to at least $k^{\ast}+1$ or beyond). Thus, if $E$ increases then $k^{\ast}$ weakly increases. This also trivially extends to the case in which the premises of Lemma 3 do not hold (as then either $k^{\ast}=1$ or $k^{\ast}=n$). In other words, we have the following:

\begin{proposition}\label{prizes4}
The optimal value $k^{\ast}$ weakly increases with $E$.
\end{proposition}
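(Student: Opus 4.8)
The plan is to reduce the claim to two separate monotonicity properties of the predicate ``$R^k$ improves competitive balance'', i.e. $H(X^{k,E})\le H(X)$, and then to invoke an elementary monotone-comparative-statics argument. For fixed $X$, I would define the improving set $S(E)=\{k\in\{1,\dots,n\}:H(X^{k,E})\le H(X)\}$. By Proposition \ref{prizes3} this set is always of the form $\{k^{\ast}(X,E),\dots,n\}$, an up-set in $k$ (and nonempty, since $n\in S(E)$ by Proposition \ref{prizes2}), so that $k^{\ast}(X,E)=\min S(E)$. The statement ``$k^{\ast}$ weakly increases with $E$'' is then equivalent to the inclusion $S(E')\subseteq S(E)$ for every $E'\ge E$, because the minimum of a subset is weakly larger. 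Hence it suffices to prove that, for each fixed $k$, the predicate $H(X^{k,E})\le H(X)$ is a nonincreasing function of $E$.

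To establish this I would start from the reformulation of Lemma \ref{prizesk} already derived in the paragraph preceding the statement, namely that $H(X^{k,E})\le H(X)$ holds iff $\bigl(k\sum_{i\in N}x_i^2-x^2\bigr)E\ge 2x\bigl(x\sum_{i=1}^k x_i-k\sum_{i\in N}x_i^2\bigr)$, a linear inequality in $E$. For fixed $k$ I would then split on the sign of the coefficient $k\sum_{i\in N}x_i^2-x^2$. If it is negative we are in the setting of Lemma \ref{monot2}: dividing reverses the inequality and the predicate becomes $E\le\hat E(k,X)$, which is nonincreasing in $E$ (and in the sub-case where the numerator $k\sum_{i\in N}x_i^2-x\sum_{i=1}^k x_i$ is itself nonpositive, $\hat E(k,X)\le 0$ and the predicate is the constant $0$). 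If the coefficient is positive or zero, the predicate holds for every $E\ge 0$ and is the constant $1$. In each regime the predicate is nonincreasing in $E$, which is exactly the inclusion $S(E')\subseteq S(E)$ we need.

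The one point requiring care — and the main obstacle — is the sign-flip in the coefficient of $E$. When that coefficient is positive, dividing yields a \emph{lower} bound $E\ge 2x(x\sum_{i=1}^k x_i-k\sum_{i\in N}x_i^2)/(k\sum_{i\in N}x_i^2-x^2)$, and unless one checks that this bound is vacuous the predicate could in principle read ``improves only for large $E$'', i.e. be \emph{increasing} in $E$, which would sink the argument. I would rule this out by showing the right-hand constant $2x(x\sum_{i=1}^k x_i-k\sum_{i\in N}x_i^2)$ is nonpositive in this regime: indeed $\sum_{i=1}^k x_i\le x$ gives $x\sum_{i=1}^k x_i\le x^2<k\sum_{i\in N}x_i^2$, so the bound is negative and the inequality holds for all $E\ge 0$. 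With monotonicity of the predicate in $E$ secured in every regime, the inclusion $S(E')\subseteq S(E)$ follows and hence $k^{\ast}(X,E')\ge k^{\ast}(X,E)$. Equivalently, restricting to the nondegenerate range of Lemma \ref{monot2} one may phrase the conclusion through the bracketing $\hat E(k^{\ast},X)\le E\le\hat E(k^{\ast}+1,X)$ together with the fact, established there, that $\hat E(\cdot,X)$ is increasing in $k$; the degenerate ranges pin $k^{\ast}$ at $1$ or $n$, where $E$-monotonicity is immediate.
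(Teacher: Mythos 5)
Your proof is correct, and while it rests on the same computation as the paper's --- the rearrangement of Lemma \ref{prizesk} into the linear-in-$E$ condition $\bigl(k\sum_{i\in N}x_i^2-x^2\bigr)E\ge 2x\bigl(x\sum_{i=1}^k x_i-k\sum_{i\in N}x_i^2\bigr)$ and the case split on the sign of the coefficient --- the final assembly is genuinely different. The paper invokes Lemma \ref{monot2} (monotonicity of $\hat E(\cdot,X)$ in $k$) to bracket $E$ between $\hat E(k^{\ast},X)$ and $\hat E(k^{\ast}+1,X)$ and then argues that raising $E$ can only push $k^{\ast}$ up, waving off the degenerate regimes with the remark that they pin $k^{\ast}$ at $1$ or $n$. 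You instead fix $k$, show the improvement predicate is a nonincreasing indicator of $E$ in every regime (constant true, constant false, or a downward step at $\hat E(k,X)$), and combine this with the up-set structure already delivered by Proposition \ref{prizes3} to get $S(E')\subseteq S(E)$ and hence $\min S(E')\ge\min S(E)$. This buys you two things: you never need cross-$k$ comparisons of $\hat E$, so Lemma \ref{monot2} becomes dispensable for this proposition; and the edge cases are handled uniformly rather than by appeal to the premises of Lemma \ref{monot2} holding simultaneously for consecutive $k$ (which, for a given $X$, they need not --- different $k$ can sit in different regimes). Your explicit check that the lower-bound regime ($k\sum_{i\in N}x_i^2>x^2$) is vacuous because $x\sum_{i=1}^k x_i\le x^2$ is exactly the point that makes the per-$k$ monotonicity in $E$ go through, and it is the right thing to isolate.
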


\section{An illustration}
We now illustrate our results for a special case. We take as starting point the squad spending limits in La Liga (the highest professional soccer league in Spain) for the 2023/2024 season.\footnote{
https://www.statista.com/statistics/764962/budget-equipment-from-football-from-the-league-in-spain-2015-2016/} The squad spending limit denotes the maximum amount that each club or public limited sports company (SAD) can spend during the course of the summer season, including costs related to players, the manager, assistant manager and fitness coach. It also includes spending on the reserves, the youth system and other areas. As clubs usually spend all they are allowed, we can consider these spending limits as equivalent to budgets. 

We start noticing that the HHI for the initial distribution is $1232.4$. Thus, according to the U.S. Horizontal Merger Guidelines, within the \textit{moderately concentrated} region. It is important to note that Barcelona's limit spending (see Table 1) reflects a significant decrease from the ceiling established a few months earlier due to the diminishing impact of the assets they sold off to improve short-term finances. Without that decrease, the HHI for the initial distribution would have been much higher, actually very close to the border between the \textit{moderately concentrated} region and the \textit{highly concentrated} region.% when La Liga notified all clubs of their compensation levels for the upcoming five months when they would be reviewed again after the conclusion of the January

We now apply our results and obtain that, for each $k=1,2,\dots, 5$, the $k$-top rules hurt competitive balance upon increasing the value of HHI, for each possible value of the prize endowment $E$. The extent of this increase depends on the specific rule and the specific value of the endowment. For instance, if $k=1$ and $E\ge 425$ (million euros), the HHI of the resulting distribution surpasses $1800$; thus, becoming part of the \textit{highly concentrated} region. 

Note that La Liga is guaranteed $4$ slots in the UEFA Champions League each year, which sometimes becomes $5$ (when the UEFA Champions League winner or the UEFA Europa League winner is also from La Liga, and did not finish within the Top 4 in the domestic competition), but never more than that. Thus, we can conclude that allocating UEFA Champions League prizes to La Liga by rewarding only its clubs that qualified for the UEFA Champions League will certainly hurt La Liga's competitive balance. In other words, to prevent a decrease of competitive balance in La Liga, the prizes from the UEFA Champions League should be redistributed to reach clubs that do not qualify for the UEFA Champions League.

We also obtain that, for each $k=9,10,\dots, 20$, the $k$-top rules improve competitive balance upon decreasing the value of HHI, for each possible value of the prize endowment $E$. The extent of this decrease also depends on the specific rule and the specific value of the endowment. For instance, if $k=20$ and $E\ge 540$ (million euros), the HHI of the resulting distribution falls short of $1000$; thus, within the \textit{unconcentrated} region. 

Finally, we obtain that for each $k=6,7,8$, the $k$-top rules improve competitive balance upon decreasing the value of HHI, for each possible value of the prize endowment $E$ within a realistic domain.\footnote{More precisely, %for $E$ lower than $1264$, $5544$, and $86338$ million euros, respectively. This means that, 
when $E<1264$ (millions), the $6$-top rule increases competitive balance (the HHI value is lower). Likewise, when $E<5544$ (millions), the $7$-top rule increases competitive balance (the HHI value is lower). Finally, when $E<86338$(millions), the $8$-top rule increases competitive balance (the HHI value is lower).} Thus, in the parlance of Proposition 3, we obtain that for the case of La Liga, $k^{\ast}=6$.

\begin{equation*}
    \text{Insert Table 1 about here}
    \end{equation*}
    
 \begin{figure}[h]
  \centering
  \includegraphics{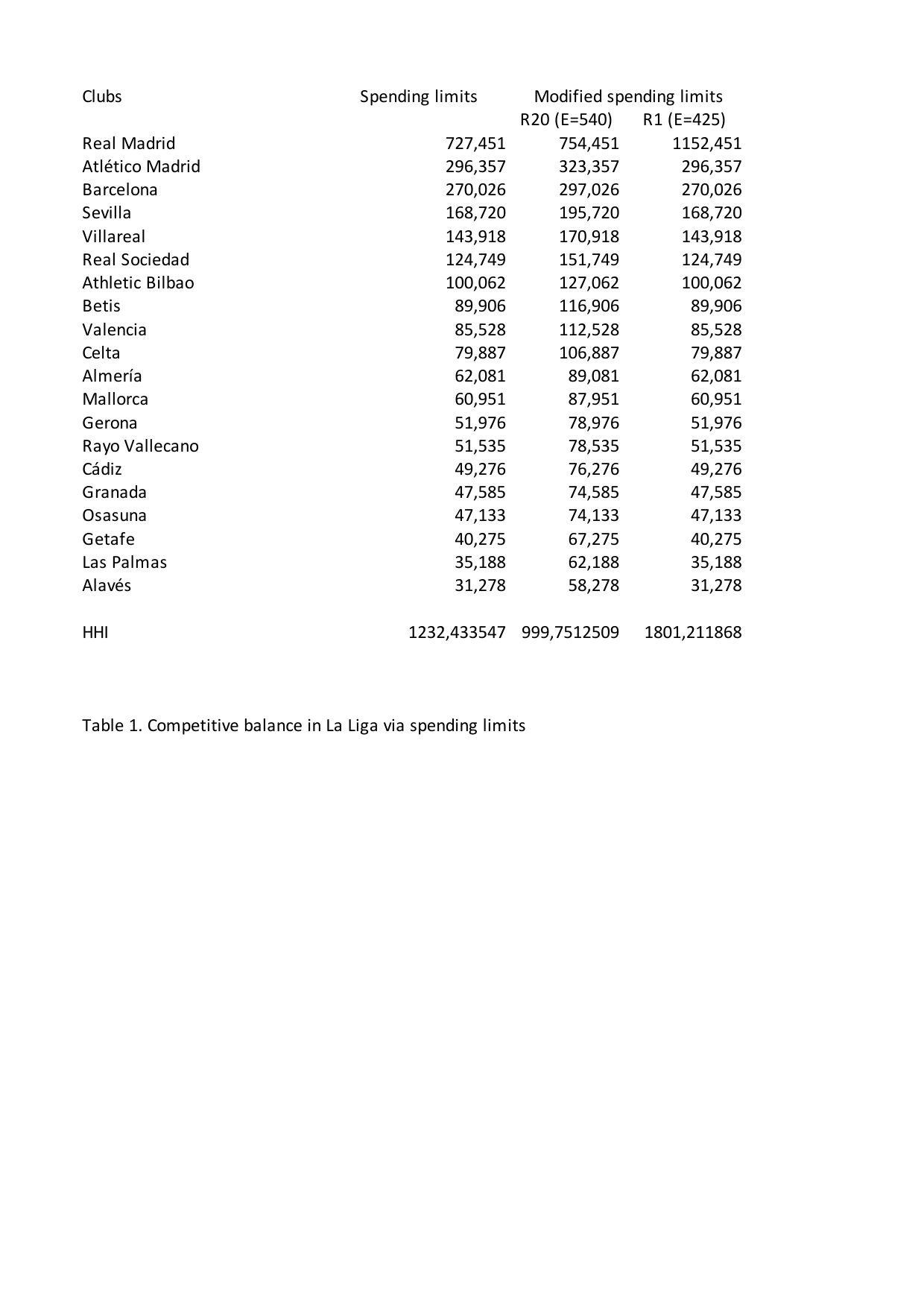}
%  \includegraphics[scale=0.65]{TABLE1.pdf}
  %\caption{Competitive balance in La Liga via spending limits. \label{Table1}}
\end{figure}

\section{Discussion}
We have explored in this paper the effect that the prize allocation of international (sports) competitions has on the structure of domestic competitions. For that purpose, we have resorted to the well-known Herfindahl-Hirschman index. By comparing the value of the index before and after the clubs get prize money for participation in international tournaments, one can analyze the changes in competitive balance. Our main insights can be summarized in the next three items.
\begin{itemize}
\item For each initial clubs' budget distribution in the league, rewarding a high number of clubs with the same prize improves competitive balance, while distributing prize money to a low number of clubs worsens competitive balance.
\item For each initial clubs' budget distribution in the league, equal distribution of prize money between the clubs participating in the international competitions is better (from the competitive balance perspective) than any other monotonic (with respect to a position in the league) distribution between the same clubs.
\item There exists a threshold of the prize value %partitioning the regions with 
such that competitive balance increases (as a function of the prize value) below the threshold, whereas it declines (as a function of the prize value) above the threshold.
\end{itemize}

In other words, as expected, the rule awarding only the top club in the domestic competition unambiguously hurts competitive balance in the domestic competition, whereas the rule equally awarding all clubs in the domestic competition unambiguously improves competitive balance in the domestic competition. Our more insightful results show that compromises between these two polar schemes offer intermediate results. More precisely, the rules that award a subset of the top clubs in each domestic competition have interesting features. On the one hand, they are all fully ranked with respect to competitive balance. That is, the larger the number of top clubs considered, the better. This implies that ``weaker" leagues with only one spot in the international competition are hurt more than the other competitions. On the other hand, although fully ranked according to competitive balance, the fact that the resulting rule hurts or improves competitive balance with respect to the status quo (without prize allocation) depends on the initial conditions (the pre-allocation budget distribution and the size of the prize to be allocated). By extending our benchmark analysis we find support for an \textit{equal-share} scheme compared to \textit{uneven} prize allocations. 

One might infer from the above that UEFA's idea to share at least the market pool more equally amongst participants from the same country seems to be a good idea. %\footnote{$50\%$ depends on the number of participating teams; We knew this, but it is not exactly equal sharing because the better the participating team was in the season before, the more the team gets, i.e., (e.g.) for 2 teams $55\%$ (league winner) and $45\%$ (league runner-up).$50\%$ depends on number of matches played (proportionally), i.e., a team winning the CL gets more than another team from the same league.} 
Thus, in a reward sharing scheme compromising between the market pool and pure performance pay, our analysis suggests to prioritize the former. Likewise, the idea to redistribute at least a small share of revenues from Eredivisie club's participation in UEFA club competitions amongst the non-participating clubs in the Dutch first division seems reasonable according to our findings.\footnote{https://eredivisie.eu/news/eredivisie-first-league-to-redistribute-revenues-from-uefa-club-competitions/}

%When Top 5 leagues are less affected than weaker leagues it would be interesting/relevant to explore empirically the effects of CB payouts in smaller leagues (because we already find in 2010 for Top 5 leagues that payout harm CB).} We also show that the optimal value regarding the number of top teams to be awarded only weakly increases in the size of the prize. Moreover, for a given number of top teams awarded, we find that competitive balance follows a U-shape regarding the size of the prize. In other words, competitive balance is declining below some threshold value of the prize and raising above that threshold.

Moreover, we might argue that increasing the number of participating clubs, allocating more than just one spot at least to some leagues, seems reasonable. But it remains an empirical question whether UEFA Champions League revenues have increased (or will increase after format change in 2024) enough to mitigate the effect on competitive balance. Likewise, as we suggest that the Top 5 European leagues (French Ligue 1, Italian Serie A, German Bundesliga, Spanish La Liga and English Premier League) are less affected (because they get more spots in the international competition), it would be interesting to explore empirically the effects in smaller leagues to complement what Pawlowski et al., (2010) found for the Top 5 leagues.

We also believe our work provides relevant insights for the strand of literature on financial sustainability
%https://www.uefa.com/insideuefa/news/0246-0e796c23daa9-41f78afb0c7a-1000--financial-sustainability/ Financial Fair Play 
(e.g., Peeters and Szymanski, 2014). Both regulation policies (distribution of prize money and financial sustainability rules) play a crucial role in determining the future of European football. In both cases, one of the most important questions is what will happen with the ensuing inequality.

We conclude with two caveats. 

First, our analysis has relied on HHI as a proxy for competitive balance, which could be rationalized upon connecting with the classical literature mentioned at the introduction, which relates HHI to welfare via fan interest. But this is contested as some leagues with different levels of competitive balance in recent years (e.g., Bundesliga and Premier League) have stadiums operating at almost full capacity over a season.%The Premier League "stadiums were at 98.7% of capacity in 2022-23"1. Bundesliga sells 92%2. What is the welfare link with the HHI as measured?
\footnote{The uncertainty of outcome hypothesis is nevertheless usually supported with broadcasting figures, rather than stadium attendance.} %some TV viewing studies support UOH (and any analysis on surprise/suspense  which is related  was just conducted using TV viewing figures ecause the  lit, in particular, does not find much support for the UOH? Moreover, big revenue streams come from (international) broadcasting rights rather than attendance figures; As such, managers care more about (TV) viewing anyway
One might argue that regulatory agencies care about ranges (rather than specific values) of HHI. To this end, we have also illustrated that our results can go beyond a linear concern for competitive balance, offering further insights about the extent of HHI changes that prize allocations yield. More precisely, we have illustrated in a specific case (when the initial budget distribution is provided) that alternative prize allocations can also generate shifts in the range regions of HHI (from highly concentrated to moderately concentrated, or even unconcentrated, and vice versa). 

Second, our analysis is based on monotonicity assumptions: better clubs domestically perform better in international competitions, thus obtain higher revenues and get even stronger. In reality, this is not always the case and unexpected results might occur. However, the probability of unexpected results is low. On the other hand, relative differences in budgets usually translate into much smaller relative differences in outcomes (e.g., points scored in a season). 
It is left for future research to study the robustness of our results to surprises. A plausible way to do so would be considering an expectation of the HHI. Alternatively, one could explore a multi-period model, where unexpected results improve competitive balance, but after many seasons the strong clubs still get even stronger. %Though this is not an easy point to tackle, I believe that the robustness of results to unexpected outcomes is an important issue here that comes from real world tournaments.
%In other words: for years people argue that the elite clubs always win and dominate but then Bremen wins in Munich and Leverkusen plays top (even though they might be just Top 5 or so); By looking at budget differences (which are huge even at the very top end between lets say Top1 and Top 5 in Germany) one would expect this cannot be but the correlation between budget and points scored is not linear.

%\newpage

\end{document}